\newlist{abbrv}{itemize}{1}
\setlist[abbrv,1]{label=,labelwidth=1in,align=parleft,itemsep=0.1\baselineskip,leftmargin=!}
\newcolumntype{M}[1]{>{\hbox to #1\bgroup\hss$}l<{$\egroup}}
\newcommand\@brcolwidth{0.67em}
\def\@brarray[#1]{\array{r*\c@MaxMatrixCols {M{#1}}}}
\newtheorem{cons}{Construction}
\title{On Construction of Approximate Real Mutually Unbiased Bases for an infinite class of dimensions $d \not\equiv 0 \bmod 4$}
\renewcommand*{\@fnsymbol}[1]{\ensuremath{\ifcase#1\or *\or \dagger\or \ddagger\or
   \mathsection\or \mathparagraph\or \|\or **\or \dagger\dagger
   \or \ddagger\ddagger \else\@ctrerr\fi}}
\author{Ajeet Kumar \and Rakesh Kumar \and Subhamoy Maitra \and Uddipto Mandal}
\institute{
A. Kumar \at
Applied Statistics Unit, Indian Statistical Institute, Kolkata, India,
\email{ajeetk52@gmail.com}
\and
R. Kumar \at
Applied Statistics Unit, Indian Statistical Institute, Kolkata, India,
\email{rkmath1729@gmail.com}
\and
S. Maitra \at
Applied Statistics Unit, Indian Statistical Institute, Kolkata, India,
\email{subho@isical.ac.in}
\and
U. Mandal \at
Indian Institute of Technology, Kharagpur, India,
\email{uddiptomandal2006.24@kgpian.iitkgp.ac.in}
}
\authorrunning{\textsc{Ajeet Kumar}, \textsc{Rakesh Kumar}, \textsc{and Subhamoy Maitra}}
\begin{document}
\maketitle
\begin{abstract}
It is known that real Mutually Unbiased Bases (MUBs) do not exist for any dimension $d > 2$ which is not divisible by 4. Thus, the next combinatorial question is how one can construct Approximate Real MUBs (ARMUBs) in this direction with encouraging parameters. In this paper, for the first time, we show that it is possible to construct $> \lceil \sqrt{d} \rceil$ many ARMUBs for certain odd dimensions $d$ of the form $d = (4n-t)s$, $t = 1, 2, 3$, where $n$ is a natural number and $s$ is an odd prime power. Our method exploits any available $4n \times 4n$ real Hadamard matrix $H_{4n}$ (conjectured to be true) and uses this to construct an orthogonal matrix ${Y}_{4n-t}$ of size $(4n - t) \times (4n - t)$, such that the absolute value of each entry varies a little from $\frac{1}{\sqrt{4n-t}}$. In our construction, the absolute value of the inner product between any pair of basis vectors from two different ARMUBs will be $\leq \frac{1}{\sqrt{d}}(1 + O(d^{-\frac{1}{4}})) < 2$, for proper choices of parameters, the class of dimensions $d$ being infinitely large. 
\end{abstract}
\keywords{Combinatorial Design, \ Hadamard Matrices, \ Mutually Unbiased Bases, \ Quantum Information Theory, \ Resolvable Block Design.}
\subclass{81P68}
\section{Introduction}
\label{ch6:Introduction}
Mutually Unbiased Bases (MUBs) are important combinatorial objects that received serious attention in Quantum Information Processing. Such structures are useful in different aspects of Quantum Cryptology and Communications. For example, the list includes Quantum Key Distribution (QKD), Teleportation, Entanglement Swapping, Dense Coding, Quantum Tomography, etc. (see~\cite{MUB1} and the references therein). It is well known that given any finite dimensional Hilbert space $\mathbb{C}^d$, the number of MUBs can at most be $d+1$. In spite of considerable effort for several decades, the cases reaching the upper bound could only be constructed when $d$ is a prime power~\cite{IV81,wootters1989optimal,bandyopadhyay2002new,KL03,Sulc2007}. For a specific dimension $d$, other than the prime powers, constructing a larger number of MUBs (upper bounded by $d+1$) is believed to be one of the most challenging problems in Quantum Information Theory. The situation becomes more restrictive when we consider the elements of the MUBs to be real. In particular, there cannot be even two real MUBs for a dimension greater than 2 and not divisible by 4, as pair of MUBs can always be transformed into Hadamard matrix which is MUB with respect to the identity matrix but there doesn't exist real Hadamard matrix having dimensions greater than 2 and not divisible by 4.

Mathematical tools from various domains have already been exploited to construct the MUBs, for example the use of finite fields~\cite{wootters1989optimal,KL03} as well as maximal set of commuting bases~\cite{bandyopadhyay2002new}. For dimensions, that are not prime powers, constructing large number of MUBs remains elusive. At this point, various kinds of Approximate MUBs have been explored. There are results in this domain where such approximate structures have been constructed using character sums over Galois Rings or Galois Fields~\cite{AMUB2,AMUB1,AMUB-MixedCharacterSum,AMUB-CharacterSum,AMUB-GaloisRing,ARMUB-fromComplexAMUB}, combinatorial designs~\cite{ak21,ak22,ak23} and computational search methods~\cite{ak22a}.

When MUBs are constructed over $\mathbb{R}^d$, one can obtain real MUBs. This area of research connects with Quadratic Forms~\cite{cameron1991quadratic}, Association Schemes~\cite{ReMUB-AssociationSchemes,AssociationScheme-Coding}, Equi-angular Lines, Equi-angular Tight Frames over $\mathbb{R}^d$~\cite{ReMUB-FusionFrames}, Representation of Groups~\cite{ReMUB-GroupRepresentation}, Mutually Unbiased Real Hadamard Matrices, Bi-angular vectors over $\mathbb{R}^d$~\cite{Holzmann2010,Kharaghani2018,Best2015} and Codes~\cite{Calderbank1997}. As we have pointed out, large number of real MUBs are non-existent for most of the dimensions~\cite{MUB2}. In fact, only for $d = 4^s, s > 1$, one can obtain $\frac{d}{2} + 1$ many real MUBs which is maximum for any dimension $d$, whereas for most of the dimensions $d$, which are not perfect square, we have at best only $2$ real MUBs~\cite{MUB2}. One may refer to~\cite[Table 1]{MUB2} for a comprehensive view regarding this. This provides the motivation to construct Approximate Real MUBs (ARMUBs) as evident in recent literature~\cite{ak21,ak22,ARMUB-fromComplexAMUB}. Given $d = p_1^{n_1} p_2^{n_2} \ldots p_s^{n_s}$, the lower bound on the number of MUBs in $\mathbb{C}^d$ is $p_r^{n_r}  + 1$ where $p_r^{n_r} = \min\{{p_1^{n_1}, p_2^{n_2}, \ldots,p_s^{n_s}}\}$. Naturally, constructing a good number of MUBs for any composite dimension proved to be elusive even over $\mathbb{C}^d$. In fact, the number of such bases is very small~\cite{MUB2} when we consider the problem over the real vector space $\mathbb{R}^d$. 
This is summarized in~\cite[Theorem 3.2]{ReMUB-AssociationSchemes} too. In this regard, one must note that the there cannot be any pair of real MUBs for dimensions which are $> 2$ and not divisible by 4. This is immediately understood as there cannot be any real Hadamard matrix in such dimensions and existence of a pair of real MUBs provides a Hadamard matrix. With this motivation, in this paper, we consider construction of Approximate Real MUBs (ARMUBs) for dimensions $d \not\equiv 0 \bmod 4$ where $d > 2$. 

\subsection{Organization \& Contribution}
The organization of the paper is as follows. 
In Section~\ref{ch6:pre:ARMUB}, we define MUBs and the approximate variants. We outline the existing ideas with examples that will be used in our contributions. 
In Section~\ref{ch6:cont:ARMUB}, we construct the orthogonal matrices of orders $4n - t, \ t = 1, 2, 3$, from real Hadamard matrix ${H}_{4n}$ of order $4n$. We name these as $\epsilon$-Hadamard matrices. Finally, using such matrices and suitable RBDs, we construct ARMUBs for dimensions $d = q(4n-t)$, for $t = 1, 2, 3$. We will show that when $q$ is an odd prime power, and the order of $4n$ and $q$ are the same, then we can have order of $\sqrt{d}$ many ARMUBs with significantly low inner product values.

To compare with the earlier works, we will refer to~\cite{ak21,ak22,ak23}. 
\begin{itemize}
\item In~\cite{ak21}, the ARMUBs are constructed for the dimensions $d = (4x)^2$, where $x$ is a prime power. The number of such ARMUBs was $\frac{\sqrt{d}}{4}+1$, and the value of the inner products was $\leq \frac{4}{\sqrt{d}}$.
\item This has been improved in~\cite{ak22}, where ARMUBs for dimensions $d = q(q+1)$, when prime power $q \equiv 3 \bmod 4$ were considered. The inner products were improved to $< \frac{2}{\sqrt{d}}$, and $\lceil\sqrt{d}\rceil$ many such ARMUBs could be obtained. In certain cases results were available for 
$d = sq^2$, where $q$ is a prime power and $sq \equiv 0 \bmod 4$.
\item Later APMUBs (Almost Perfect) MUBs were studied in~\cite{ak23}, and several constructions over complex numbers were considered. The constructions could be achieved for reals, but in those cases the dimensions are again divisible by 4, as they require existence of real Hadamard matrices. 
\end{itemize}
In this paper we modify the real Hadamard matrices to have real unitary ones of dimensions $d = q(4n-t)$, for $t = 1, 2, 3$, which we refer as $\epsilon$-Hadamard. These are used in conjunction with RBDs in line of the constructions in~\cite{ak22,ak23} to obtain new constructions of $\beta$-ARMUBs for the dimensions $d > 2$ and $d \not\equiv 0 \bmod 4$, with $\beta < 2$. In algebraic terms, this bound is 
$\beta \leq \frac{1}{\sqrt{d}}(1 + O(d^{-\frac{1}{4}}))$.
When $s$ is an odd prime power, and the order of $4n-t$ and $s$ are same, we obtain $O(\sqrt{d})$ many ARMUBs for such dimensions. Note that even a pair of real MUBs are not available for the dimensions $d > 2$ and $d \not\equiv 0 \bmod 4$ and thus the approximate versions are the only relevant solutions in this direction, that we first present in this paper.

\section{Preliminaries \& Background}
\label{ch6:pre:ARMUB}
Let us first formally introduce the MUBs.
\begin{definition}
\label{ch6:adef1}
Let $\mathbb{C}^{d}$ be a $d$-dimensional complex vector space. Now consider two orthonormal bases, 
$M_l = \left\{\ket{\psi_1^l}, \ket{\psi_2^l}, \ldots, \ket{\psi_{d}^l}\right\} 
\mbox{ and } M_m = \left\{\ket{\psi_1^m}, \ket{\psi_2^m}, \ldots, \ket{\psi_{d}^m}\right\}$, 
in $\mathbb{C}^{d}$. These two bases will be called {\sf Mutually Unbiased} if we have $\left|\braket{\psi_i^l | \psi_j^m}\right| = \displaystyle\frac{1}{\sqrt{d}}, \ \forall i, j \in \left \{1, 2, \ldots, d\right\}$.   
The set $\mathbb{M} = \left\{M_1, M_2, \ldots, M_r\right\}$ consisting of such orthonormal bases will form MUBs of size $r$, 
if every pair in the set is mutually unbiased.
\end{definition}
When the two bases are not mutually unbiased, but almost close to that, then we call them {\sf Approximate MUBs} (AMUBs)~\cite{ak21,ak22}.
\begin{definition}
\label{ch6:adef2}
The two bases will be called {\sf Approximately Mutually Unbiased} if we have
\begin{equation}
    \left|\braket{\psi_i^l | \psi_j^m}\right| \approx \displaystyle\frac{1}{\sqrt{d}}, \ \forall i, j \in \left \{1, 2, \ldots, d\right\} .   
\end{equation}
The set $\mathbb{M} = \left\{M_1, M_2, \ldots, M_r\right\}$ consisting of such orthonormal bases will form {\sf Approximate MUBs} (AMUBs) of size $r$, 
if every pair in the set is approximately mutually unbiased. This approximation needs to be formalized and it is expected that the deviation from $\frac{1}{\sqrt{d}}$ should be as small as possible.
When the bases are real, we call them {\sf Approximate Real MUBs (ARMUBs)}. 
\end{definition}
In this regard, there are two more definitions.
\begin{definition}
    A set of $r$ orthonormal bases $M_{i}$, $0\leq i \leq r-1$ of $\mathbb{R}^d$ are called $\beta$-ARMUBs if for any two bases $M_l$ and $M_m$, we have 
\begin{equation}
    \left|\braket{\psi_i^l | \psi_j^m}\right| \leq \displaystyle\frac{\beta}{\sqrt{d}}, \ \forall i, j \in \left \{1, 2, \ldots, d\right\} .   
\end{equation}
\end{definition}
\begin{definition} ~\cite{ak23}
\label{ch6:def:APMUB}
The set $\mathbb{M} = \{M_1, M_2, \ldots, M_r\}$ will be called {\sf Almost Perfect MUBs} (APMUBs) if $\Delta  = \left \{0,\frac{\beta}{\sqrt{d}}  \right \}$, i.e., the set contains just two values, such that  $\beta = 1+O(d^{-\lambda}) \leq 2$, for any real $\lambda > 0$. When the bases are real, we call them {\sf Almost Perfect Real MUBs} (APRMUBs). 
\end{definition}
While studying the approximate MUBs over $\mathbb{C}^d$, a few constructions were proposed in~\cite{AMUB2,AMUB1,AMUB-CharacterSum}. However, such techniques cannot be applied in construction of real MUBs. In this direction, a construction has been proposed in~\cite{ak21} for Approximate Real MUBs using real Hadamard matrices. It has been described in~\cite{ak21} that $\frac{\sqrt{d}}{4}+1$ ARMUBs with maximum value of the inner product as $\frac{4}{\sqrt{d}}$ could be achieved for $d = (4q)^2$, where $q$ is a prime. In~\cite{ak22}, it could be shown that such results can be generalized as well as improved exploiting the Resolvable Block Designs (RBDs). The earlier result of~\cite{ak21} could be generalized in~\cite{ak22} for $d = sq^2$, where $q$ is a prime power. Further, the parameters could be improved too in~\cite{ak22}. It has been shown in~\cite{ak22} that for $d = q(q+1)$, where $q$ is a prime power and $q \equiv 3 \bmod 4$, it is possible to construct $\lceil{\sqrt{d}}\rceil = q+1$ many ARMUBs with maximum value of the inner product upper bounded by $\frac{2}{\sqrt{d}}$, between vectors belonging to different bases. Therefore, the improvement in the result of~\cite{ak22} is two-fold. First, the number of MUBs is greater and second, the maximum of the inner product values is lower compared to~\cite{ak21}. Works in a similar direction with advancement considering APRMUBs have been recently published in~\cite{ak23}, but that work did not consider odd dimensions too.

One may note that each MUB in the space $C^{d}$ consists of $d$ orthogonal unit vectors which, collectively, can be thought of as a unitary $d \times d$ matrix. Two (or more) MUBs thus correspond to two (or more) unitary matrices, one of which can always be mapped to the identity $I$ of the space $C^{d}$, using a unitary transformation. For example, suppose we have $r$ many MUBs $\{M_{1}, M_{2}, M_{3}, \dots, M_{r}\}$ in $C^{d}$ where $r \leq d+1$ and also we can consider them as a $r$ many $d \times d$ unitary matrices. If we multiply $M^{-1}_1$ to each of the matrices from left then one can obtain $\{I,M_{1}^{-1}M_2,M_{1}^{-1}M_3,\dots, M_{1}^{-1}M_{r}\} $ as the transformed set of MUBs. As the inverse of any unitary matrix is equal to its conjugate transpose, to obtain $ M_{i}^{-1}M_j$, for $i \neq j$, we are considering inner products of each column of the two matrices in the set of MUBs. Thus, the modulus of each element of the product matrix will be $\frac{1}{\sqrt{d}}$. Taking $\frac{1}{\sqrt{d}}$ common, the modulus of each of the elements will be 1, i.e., we will have complex Hadamard matrices.

\subsection{Some basic examples related to MUBs}
We have already explained MUBs in Definition~\ref{ch6:adef1}. Now we will proceed with our explanations with small dimensions. For $d=2$, where we have 3 MUBs, 
$M^{(2)}_0 = \left[\begin{array}{cc}
1 & 0\\
0 & 1
\end{array}\right]$, 
$M^{(2)}_1 = 
\left[\begin{array}{cc}
\frac{1}{\sqrt{2}} & \frac{1}{\sqrt{2}}\\
\frac{1}{\sqrt{2}} & -\frac{1}{\sqrt{2}}
\end{array}\right] = 
\frac{1}{\sqrt{2}}
\left[\begin{array}{cc}
1 & 1\\
1 & -1
\end{array}\right]$ and
$M^{(2)}_2 = 
\left[\begin{array}{cc}
\frac{1}{\sqrt{2}} & \frac{i}{\sqrt{2}}\\
\frac{1}{\sqrt{2}} & -\frac{i}{\sqrt{2}}
\end{array}\right] = 
\frac{1}{\sqrt{2}}
\left[\begin{array}{cc}
1 & i\\
1 & -i
\end{array}\right]$, the first two being real. However, this is not possible for the dimension $d=3$. There are indeed four MUBs, that can be represented as $M^{(3)}_0 = \left[\begin{array}{ccc}
1 & 0 & 0\\
0 & 1 & 0\\
0 & 0 & 1\\
\end{array}\right]$,
$M^{(3)}_1 = \frac{1}{\sqrt{3}} \left[\begin{array}{ccc}
1 & 1 & 1\\
\omega^2 & 1 & \omega\\
\omega^2 & \omega & 1\\
\end{array}\right]$,

$M^{(3)}_2 = \frac{1}{\sqrt{3}} \left[\begin{array}{ccc}
1 & 1 & 1\\
1 & \omega & \omega^2\\
1 & \omega^2 & \omega\\
\end{array}\right]$,
$M^{(3)}_3 = \frac{1}{\sqrt{3}} \left[\begin{array}{ccc}
1 & 1 & 1\\
\omega & \omega^2 & 1\\
\omega & 1 & \omega^2\\
\end{array}\right]$, but except the identity, none of them are reals. Thus the a pair of real MUBs is not available in this scenario.

Let us now present a set of five MUBs for $d=4$:

$M^{(4)}_0 = \left[\begin{array}{cccc}
1 & 0 & 0 & 0\\
0 & 1 & 0 & 0\\
0 & 0 & 1 & 0\\
0 & 0 & 0 & 1\\
\end{array}\right]$,
$M^{(4)}_1 = \frac{1}{2} \left[\begin{array}{cccc}
1 & 1 & 1 & 1\\
1 & -1 & 1 & -1\\
1 & 1 & -1 & -1\\
1 & -1 & -1 & 1\\
\end{array}\right]$,
$M^{(4)}_2 = \frac{1}{2} \left[\begin{array}{cccc}
1 & -1 & -i & -i\\
1 & -1 & i & i\\
1 & 1 & i & -i\\
1 & 1 & -i & i\\
\end{array}\right]$,

$M^{(4)}_3 = \frac{1}{2} \left[\begin{array}{cccc}
1 & -i & -i & -1\\
1 & -i & i & 1\\
1 & i & i & -1\\
1 & i & -i & 1\\
\end{array}\right]$,
$M^{(4)}_4 = \frac{1}{2} \left[\begin{array}{cccc}
1 & -i & -1 & -i\\
1 & -i & 1 & i\\
1 & i & -1 & i\\
1 & i & 1 & -i\\
\end{array}\right]$.

There is only a pair of real MUBs in this case. Now the question is can we extend this, and the answer is affirmative with the following example with 3 real MUBs. 

$M'_{1} =
        \frac{1}{\sqrt{2}}\begin{bmatrix} 
	1 & 1 & 0 & 0 \\
	1 & -1 & 0 & 0\\
	0 & 0 & 1 & 1\\
        0 & 0 & 1 & -1\\
        \end{bmatrix},
M'_{2}=\frac{1}{\sqrt{2}}\begin{bmatrix} 
	1 & 1 & 0 & 0 \\
	0 & 0 & 1 & 1\\
	1 & -1 & 0 & 0\\
        0 & 0 & 1 & -1\\
	\end{bmatrix},
 M'_{3}=\frac{1}{\sqrt{2}}\begin{bmatrix} 
	1 & 1 & 0 & 0 \\
	0 & 0 & 1 & 1\\
	0 & 0 & 1 & -1\\
        1 & -1 & 0 & 0\\
	\end{bmatrix}$.
	
Consider the multiplication by the inverse of $M'_{1}$ from the left hand side. Here $M'_1 = (M'_1)^{\dagger} = (M'_1)^{-1}$. Then we obtain, $(M''_1)^{-1} = (M'_{1})^{-1} M'_{1} = I$:
 
 $M''_{2} = (M'_{1})^{-1} M'_{2} = (M'_{1})^{\dagger} M'_{2} = \frac{1}{2} \begin{bmatrix} 
	1 & 1 & 0 & 0 \\
	1 & -1 & 0 & 0\\
	0 & 0 & 1 & 1\\
        0 & 0 & 1 & -1\\
        \end{bmatrix} 
       \begin{bmatrix} 
	1 & 1 & 0 & 0 \\
	0 & 0 & 1 & 1\\
	1 & -1 & 0 & 0\\
        0 & 0 & 1 & -1\\
	\end{bmatrix}
       = \frac{1}{2}\begin{bmatrix} 
	1 & 1 & 1 & 1 \\
	1 & 1 & -1 & -1\\
	1 & -1 & 1 & -1\\
        1 & -1 & -1 & 1\\
	\end{bmatrix}$ and
       
        $M''_{3}= (M'_{1})^{-1}M'_{3}= (M'_{1})^{\dagger}M'_{3}$= $\frac{1}{2} \begin{bmatrix} 
	1 & 1 & 0 & 0 \\
	1 & -1 & 0 & 0\\
	0 & 0 & 1 & 1\\
        0 & 0 & 1 & -1\\
        \end{bmatrix}
         \begin{bmatrix} 
	1 & 1 & 0 & 0 \\
	0 & 0 & 1 & 1\\
	0 & 0 & 1 & -1\\
        1 & -1 & 0 & 0\\
	\end{bmatrix}
       = \frac{1}{2}\begin{bmatrix} 
	1 & 1 & 1 & 1 \\
	1 & 1 & -1 & -1\\
	1 & -1 & 1 & -1\\
        -1 & 1 & 1 & -1\\
	\end{bmatrix}$
	as Hadamard matrices.
As the examples contain real values only, the notation $\dagger$ works as simple transpose here. The structures of $M'_{1}, M'_{2}, M'_{3}$ can be achieved by the following method that has been used extensively in~\cite{ak22,ak23} in design of Approximate MUBs.

\subsection{Constructing MUBs through RBD}
\label{ch6:rbdsec}
Now we briefly explain combinatorial block design as we use that structure extensively in this paper (for deeper details refer to~\cite{stinson2007combinatorial}).	
\begin{definition}
A combinatorial block design is a pair $(X, A)$, where $X$ is a set of elements, called elements, and $A$ is a collection of non-empty subsets of $X$, called blocks. A combinatorial design is called simple, if there is no repeated block in $A$. Further, the design $(X, A)$ is called a Resolvable Block Design (RBD), if $A$ can be partitioned into $r \geq 1$ parallel classes, called resolutions, where each parallel class in the design $(X, A)$ is a subset of the disjoint blocks in $A$, whose union is $X$.
\end{definition}
For RBD$(X, A)$, with $|X| = d$, we will indicate the elements of $X$ by simple numbering, i.e., $X = \{1, 2, 3, \ldots, d\}$. Here $r$ will denote the number of parallel classes in RBD and the parallel classes will be represented by $P_1, P_2, \ldots, P_r$. The blocks in the $l$-th parallel class will be represented by $\{b_1^l, b_2^l, \ldots, b_s^l \}$, indicating that the $l$-th parallel class has $s$ many blocks. The notation $b_{ij}^l $ would represent the $j^{th}$ element of the $i^{th}$ block in the $l^{th}$ parallel class.

In our analysis we will be using the RBDs with constant block size and let us denote the block size by $k$. Further, we denote the number of blocks in a parallel class of RBD by $s$. Since in our analysis we are making of use of RBDs with constant block size, hence each parallel class will always have $s$ many blocks and $|X| = d = k \cdot s$. 

For example, consider $X = \{1, 2, 3, 4\}$, $A = \{ \{1, 2\}, \{3, 4\}, \{1, 3\}, \{2, 4\}, \{1, 4\}, \{2, 3\}\}$. Thus, $d = |X| = 4$, each block of $A$ is a set of size $k = 2$, and any element of $X$ appears in exactly $\lambda = 3$ blocks.  There are 3 parallel classes, $P_1 = \{ \{1, 2\}, \{3, 4\} \}, P_2 = \{ \{1, 3\}, \{2, 4\} \}$, and $P_3 = \{ \{1, 4\}, \{2, 3\} \}$ and here $s=2$.
 
In every block, we will arrange the elements in increasing order, and we will follow this convention throughout the paper, unless mentioned specifically. Thus $b_{ij}^l \leq b_{i,j+1}^l, \ \forall j$. This will be important to revisit when we convert the parallel classes into orthonormal bases. Another important parameter for our construction is the value of  the maximum number of common elements between any pair of blocks from different parallel classes. We denote this positive integer by $\mu$. Note that $\mu \geq 1$ for any RBD, with $r \geq 2$. 

When the conditions among two different bases are relaxed such that $\left|\braket{\psi_i^l | \psi_j^m}\right|$ can take values other than $\frac{1}{\sqrt{d}}$, then we consider the approximate version of this problem. This is due to the fact that, for most of the dimensions which are not power of some primes, obtaining a large number of MUBs reaching the upper bound is elusive. In this context we denote $\Delta$ for the set containing different values of $\left|\braket{\psi_i^l | \psi_j^m}\right|$ for $l \neq m$. Let us now explain the construction presented in~\cite{ak22,ak23}.
\begin{cons}
\label{ch6:cons:1} \
\begin{enumerate}
\item In a design $(X, A)$, choose the elements of  $X$ as some orthonormal basis vectors of $\mathbb{C}^d$. We will interpret these orthonormal vectors as columns. 
That is, when $|X|= d$ then one can write
$X = \{\ket{\psi_1}, \ket{\psi_2}, \ldots, \ket{\psi_{d}}\}$, such that $\braket{\psi_i | \psi_j}= \delta_{ij}$. $A$ contains certain subsets of $X$, called blocks.

\item Let $P = \{b_1, b_2, \ldots, b_s\}$ be one of the parallel classes of the design $(X, A)$, where the blocks are disjoint containing elements 
from $X$. Let there be $r$ parallel classes and their union will be $A$.

\item Consider one of the blocks $b_t = \{\ket{\psi_{t_1}} ,\ket{\psi_{t_2}}, \ldots, \ket{\psi_{t_{n_t}}}\}  \in P$ and let $|b_t| = n_t$. 
Corresponding to this block, choose any $n_t \times n_t$ unitary matrix $u^t_{i,j}$. This unitary matrix should be realized by Hadamard or Hadamard like ones.

\item Next construct $n_t$ many vectors in the following manner, using $b_t$ and $u^t_{i,j}$.
\begin{equation*}
\ket{\phi^t_i} =  u^t_{i,1}\ket{\psi_{t_1}} + u^t_{i,2} \ket{\psi_{t_2}} + \ldots + u^t_{i,n_t} \ket{\psi_{t_{n_t}}} = 
\sum_{k=1}^{n_t} u^t_{i,k} \ket{\psi_{t_k}}: i = 1, 2, \ldots, n_t.
\end{equation*}

\item In a similar fashion, corresponding to each block $b_j \in B$, construct $n_j$ many vectors where $|b_j| = n_j$, 
using any $n_j \times n_j$ unitary matrix. Since $\sum_{j=1}^s n_j = d$, we will get exactly $d$ many vectors. 
\end{enumerate}
\end{cons}

To explicitly demonstrate the construction of real MUBs in $\mathbb{R}^4$, 
using Resolvable $(4,2,1)$-BIBD, let $X =\left\{1,2,3,4\right\}$ be the four standard basis 
vectors in $\mathbb{R}^4$ and let $A = \left\{ P_1, P_2, P_3\right\}$ be the three
parallel classes of the resolvable design. Explicitly, one such design would be:
\begin{align*}
P_1 = \left\{ (1,2),(3,4) \right\} \,\,\,\,\, P_2 =\left \{ (1,3),(2,4) \right\} \,\,\,\,\, P_3 = \left \{ (1,4),(2,3) \right\}.
\end{align*}
Now using $H_2 = \frac{1}{\sqrt{2}}\begin{pmatrix}
     1 &   1 \\
    1  &  -1
\end{pmatrix}$ for each block of parallel class, and exploiting Construction \ref{ch6:cons:1}, 
we obtain three set of orthonormal basis vectors corresponding to each parallel class as follows.
\begin{equation*}
M_1 = \frac{1}{\sqrt{2}} \begin{pmatrix}
      1&1&0&0 \\
      1&-1 &0 &0   \\
      0&0  &1&1 \\
      0 &0  &1 &-1  \\
\end{pmatrix},
M_2 = \frac{1}{\sqrt{2}} \begin{pmatrix}
           1&1 &0 &0  \\
           0 &0 &1 &1  \\
           1&-1 &0 &0\\
            0 &0 &1 &-1\\
\end{pmatrix},
M_3 = \frac{1}{\sqrt{2}} \begin{pmatrix}
           1&1 &0 &0  \\
            0&0 &1 &1  \\
            0&0 &1 &-1\\
            1 &-1 &0 &0\\
\end{pmatrix}.
\end{equation*}
The columns of $\{M_1, M_2, M_3\}$ form the orthonormal basis vectors, and these 
orthonormal bases are MUBs in $\mathbb{R}^4$. Note that maximum number of real MUBs in 
$d=4^s , s\in \mathbb{N}$ is equal to $\frac{d}{2} + 1$ \cite{MUB2}. Hence for $d = 4$, 
the three MUBs constructed are also maximal for $\mathbb{R}^4$.
However, such a situation will not happen for any integer. In fact, as we have already discussed, such real Hadamard matrices are available for $d = 2$ and when $d$ is divisible by 4 (conjectured). That is why, we try to have Approximate Real MUBs or ARMUBs. We present this in the following contributory section.

\section{Our Construction}
\label{ch6:cont:ARMUB}
In this section we present our technique to have ARMUBs, which were not presented earlier in literature. For construction of ARMUBs, similar to Construction~\ref{ch6:cons:1}~\cite{ak22,ak23}, we make use of RBD$(X, A)$, with $|X|= d = k\times s$, where $s > k$ and $s, k$ are as close as possible, so that both $s, k$ are of $O(\sqrt{d})$. Since $d > 2$ is not multiple of $4$, thus $k$ or $s$ cannot be a multiple of $4$. For constructing ARMUBs corresponding to each parallel class of an RBD, we exploit suitable orthogonal matrix $Y$, which we call (as in Definition~\ref{ch6:defH} later) $\epsilon$-Hadamard Matrix of order $k$, modifying a real Hadamard one of order $k+t$ (where $k+t$ is divisible by 4) such that $|(Y)_{ij}|$ is very close to
$\frac{1}{\sqrt{k}}$. The Hadamard conjecture tells that in such cases we have the existence, and thus we can consider $t= \{1, 2, 3\}$ for our purpose. Let us now consider how do we modify the Hadamard matrices.

\subsection{Modifying real Hadamard matrices}
As discussed, we will construct orthogonal matrix of order $k \in \{4n-1, 4n-2, 4n- 3\}$ from Hadamard matrix of order $4n$, using the method presented below. We will call such orthogonal matrices as $\epsilon$-Hadamard Matrices. We define the notion of $\epsilon$-Hadamard matrices of order $k$, which in some sense can be considered as close counterpart of Hadamard in such order, where no Hadamard matrix exists ($k \neq 4n$) and will be helpful in constructing $\beta$-ARMUBs with good properties. The basic property of Hadamard matrix which is helpful in such construction, is the same absolute value of all the entires, which is $\frac{1}{\sqrt{4n}}$. Taking cue from this, we intend to construct orthogonal matrix for order $k \neq 4n$, such that the absolute value of the entries of the matrix are very close to each other. In this regard, let us present the following definition for our purpose.

\begin{definition}
\label{ch6:defH}
For an orthogonal matrix $\mathbb{O}_{k\times k}$, if the absolute value of each entry of this matrix $|(\mathbb{O}_{k\times k})_{ij}| = \frac{1 + \epsilon_{ij}}{\sqrt{k}}$, where $\epsilon_{ij} = \pm O(k^{-\lambda}) < 1$, such that $\lambda > 0$, then we call the orthogonal matrix $\mathbb{O}_{k\times k}$ an $\epsilon$-Hadamard matrix, when $\epsilon = \max_{i, j} |\epsilon_{ij}|$.
\end{definition}
If there are choices of many such matrices, then we will consider the one with least $\epsilon$ available at hand. When $k$ is a multiple of 4, assuming Hadamard conjecture, it is obvious that the minimal value of $\epsilon$ is zero, but when $k > 2$ is not a multiple of 4, finding the global minima of $\epsilon$ appears to be a very challenging problem. However, we first show in Theorem~\ref{ch6:th:ReducH} that such result with $\epsilon < 1$ can be achieved. Later, in Section~\ref{ch6:subsection:cases_for_hadamard}, we will consider various $U$ and correspondingly, we will obtain the orthogonal matrix with the least $\epsilon$. 

We now have the following technical result. Please note that with certain abuse of notations, we write $U_{t \times t}$ to express that it is a $t \times t$ matrix, but while explaining we only write $U$ as we can refer the $(i, j)$-th element as subscript such as $U_{ij}$ or $(U)_{ij}$. 
\begin{lemma}
\label{ch6:Lem:Reduc}
Let $N= 
\left[ 
\begin{array}{cc}
U_{t \times t} & V_{t \times (m-t)} \\
W_{(m-t)\times t} & D_{(m-t) \times (m-t)} 
\end{array} 
\right]_{m \times m}$
be an $m \times m$ orthogonal matrix, containing block matrices $U$, $V$, $W$, and $D$ of sizes as indicated.
If either $(\mathbb{I}+U)$ or $(\mathbb{I}-U)$ is a nonsingular matrix, then 
$Y_1 = D - W(\mathbb{I}+U)^{-1}V \quad \text{and} \quad Y_2 = D + W( \mathbb{I}-U)^{-1}V$
are both orthogonal matrices of order $m-t$.
\end{lemma}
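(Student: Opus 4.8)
The plan is to sidestep the entrywise block relations coming from $N^{T}N=\mathbb{I}$ and use instead the single fact that an orthogonal matrix is exactly a linear isometry of $\mathbb{R}^{m}$. Write a generic vector of $\mathbb{R}^{m}=\mathbb{R}^{t}\oplus\mathbb{R}^{m-t}$ as $\binom{x}{y}$, so that $N\binom{x}{y}=\binom{Ux+Vy}{Wx+Dy}$. Assuming $\mathbb{I}+U$ is nonsingular (this is precisely what makes $Y_{1}$ well defined), I would, for a given $y\in\mathbb{R}^{m-t}$, choose the $\mathbb{R}^{t}$-component $x=-(\mathbb{I}+U)^{-1}Vy$. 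Then $(\mathbb{I}+U)x=-Vy$, hence $Ux+Vy=-x$, and therefore
\[
N\binom{x}{y}=\binom{-x}{\,Wx+Dy\,}=\binom{-x}{\,\bigl(D-W(\mathbb{I}+U)^{-1}V\bigr)y\,}=\binom{-x}{Y_{1}y}.
\]

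Because $N$ preserves Euclidean norm, this yields $\|x\|^{2}+\|y\|^{2}=\|x\|^{2}+\|Y_{1}y\|^{2}$, i.e.\ $\|Y_{1}y\|=\|y\|$ for every $y$. Since $Y_{1}$ is a \emph{square} $(m-t)\times(m-t)$ matrix, being a linear isometry forces $Y_{1}^{T}Y_{1}=Y_{1}Y_{1}^{T}=\mathbb{I}$, so $Y_{1}$ is orthogonal. For $Y_{2}$ I would repeat the argument assuming $\mathbb{I}-U$ nonsingular with the choice $x=(\mathbb{I}-U)^{-1}Vy$: now $(\mathbb{I}-U)x=Vy$ gives $Ux+Vy=x$, hence $N\binom{x}{y}=\binom{x}{Y_{2}y}$, and norm preservation again gives $\|Y_{2}y\|=\|y\|$ for all $y$, so $Y_{2}$ is orthogonal. (The ``either/or'' in the hypothesis is to be read as: the assertion about $Y_{1}$ needs nonsingularity of $\mathbb{I}+U$, that about $Y_{2}$ needs nonsingularity of $\mathbb{I}-U$.)

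I do not expect a genuine obstacle here: the only thing to spot is the substitution $x=\mp(\mathbb{I}\pm U)^{-1}Vy$ that collapses the first output block to $\pm x$, after which orthogonality of $N$ does all the work. If one prefers a purely computational proof, one can instead expand $Y_{1}^{T}Y_{1}$, substitute the identities $D^{T}D=\mathbb{I}-V^{T}V$, $D^{T}W=-V^{T}U$ and $W^{T}W=\mathbb{I}-U^{T}U$ (all read off from $N^{T}N=\mathbb{I}$), and check that the remaining term $V^{T}\bigl[-\mathbb{I}+UP+P^{T}U^{T}+P^{T}(\mathbb{I}-U^{T}U)P\bigr]V$, with $P=(\mathbb{I}+U)^{-1}$, vanishes by conjugating the bracket by $P^{-1}=\mathbb{I}+U$; that bookkeeping is routine but longer, which is why the isometry argument is the one I would present. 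An analogous computation starting from $NN^{T}=\mathbb{I}$, or feeding the test vectors into the second block instead of the first, proves $Y_{i}Y_{i}^{T}=\mathbb{I}$ directly, should one want both sides explicitly.
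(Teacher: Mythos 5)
Your proof is correct, and it takes a genuinely different route from the paper's. The paper works from $NN^{T}=\mathbb{I}$, extracts the four block identities $UU^{T}+VV^{T}=\mathbb{I}$, $UW^{T}+VD^{T}=\mathbb{O}$, $WU^{T}+DV^{T}=\mathbb{O}$, $WW^{T}+DD^{T}=\mathbb{I}$, and then expands $Y_{1}Y_{1}^{T}$ directly, factoring the resulting bracket as $(\mathbb{I}+U)^{-1}\bigl[(\mathbb{I}+U)(\mathbb{I}+U^{T})-(\mathbb{I}+U)U^{T}-U(\mathbb{I}+U^{T})-(\mathbb{I}-UU^{T})\bigr](\mathbb{I}+U^{T})^{-1}$ and checking the inner bracket telescopes to zero --- essentially the ``routine but longer'' computational variant you sketch in your closing remarks, done on the $NN^{T}$ side rather than the $N^{T}N$ side. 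Your main argument instead feeds the test vector $\binom{x}{y}$ with $x=\mp(\mathbb{I}\pm U)^{-1}Vy$ into the isometry $N$, so that the first output block collapses to $\pm x$ and norm preservation immediately yields $\lVert Y_{1,2}\,y\rVert=\lVert y\rVert$ for all $y$; since $Y_{1,2}$ is square, this forces orthogonality. Your substitution is verified easily ($(\mathbb{I}+U)x=-Vy$ gives $Ux+Vy=-x$, and $Wx+Dy=Y_{1}y$), and the step from ``square isometry'' to ``orthogonal'' is standard polarization. What each approach buys: yours is shorter, conceptually transparent (it exhibits $Y_{1,2}$ as the map induced by $N$ on a graph-type subspace, a linear-fractional/feedback construction that preserves orthogonality), and carries over verbatim to unitary matrices over $\mathbb{C}$; the paper's explicit expansion, on the other hand, produces the series form $D\mp\frac{1}{\sqrt{4n}}WV+\cdots$ that is reused immediately in Theorem~2 to bound the individual entries of $Y_{1,2}$, which your argument does not provide. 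One small presentational point: the lemma's hypothesis literally says ``either $(\mathbb{I}+U)$ or $(\mathbb{I}-U)$ is nonsingular'' while claiming both conclusions; your parenthetical reading (each conclusion under its own invertibility hypothesis) is the right one and matches what the paper actually proves.
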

\begin{proof}
For a matrix $A$, the transpose is referred as $A^T$.
Since $N N^T = \mathbb{I}_{m \times m}$, we have
\begin{align*}
& UU^T + VV^T = \mathbb{I}_{t \times t}, \\
& UW^T + VD^T = \mathbb{O}_{t \times (m-t)}, \\
& WU^T + DV^T = \mathbb{O}_{(m-t) \times t}, \\
& WW^T + DD^T = \mathbb{I}_{(m-t) \times (m-t)}.
\end{align*}
Suppose $(\mathbb{I}+U)^{-1}$ exists. Consider, the matrix 
$Y_1 = D - W( \mathbb{I}+U)^{-1}V$.
Then,
\begin{align*}
Y_1Y_1^T &= \left(D - W(\mathbb{I}+U)^{-1}V\right)\left(D - W( \mathbb{I}+U)^{-1}V\right)^T \\
&= \left(D - W(\mathbb{I}+U)^{-1}V\right)\left(D^T - V^T(\mathbb{I}+U^T)^{-1}W^T\right) \\
&= DD^T - DV^T (\mathbb{I}+U^T)^{-1}W^T - W(\mathbb{I}+U)^{-1}VD^T \\
&\quad + W(\mathbb{I}+U)^{-1}VV^T(\mathbb{I}+U^T)^{-1}W^T \\
&= \mathbb{I} - WW^T + WU^T( \mathbb{I}+U^T)^{-1}W^T + W(\mathbb{I}+U)^{-1}UW^T \\
&\quad + W( \mathbb{I}+U)^{-1}VV^T(\mathbb{I}+U^T)^{-1}W^T \\
&= \mathbb{I} - W\left[\mathbb{I} - U^T( \mathbb{I} + U^T)^{-1} - (\mathbb{I}+ U)^{-1}U - ( \mathbb{I}+U)^{-1}VV^T(\mathbb{I}+U^T)^{-1}\right]W^T.
\end{align*}
Now consider the expression inside the brackets:
\begin{align*}
&\mathbb{I} - U^T( \mathbb{I}+ U^T )^{-1} - (\mathbb{I}+U)^{-1}U - (\mathbb{I}+U)^{-1}VV^T( \mathbb{I}+ U^T)^{-1} \\
&= ( \mathbb{I}+U)^{-1}( \mathbb{I}+U)(\mathbb{I} + U^T)( \mathbb{I} + U^T)^{-1}
 - ( \mathbb{I}+U)^{-1}( \mathbb{I}+U)U^T( \mathbb{I}+U^T)^{-1} \\ 
&\quad - (\mathbb{I} +U)^{-1}U( \mathbb{I}+U^T)( \mathbb{I}+U^T)^{-1}
- ( \mathbb{I}+U)^{-1}VV^T(\mathbb{I}+U^T)^{-1} \\
&= (\mathbb{I}+U)^{-1} \left[ ( \mathbb{I}+U)( \mathbb{I}+U^T) - ( \mathbb{I}+U)U^T - U( \mathbb{I}+U^T) - (\mathbb{I} - UU^T) \right] (\mathbb{I}+U^T)^{-1}.
\end{align*}
Since,
\begin{align*}
&(\mathbb{I}+U)( \mathbb{I}+U^T) - (\mathbb{I}+U)U^T - U(\mathbb{I}+U^T) - (\mathbb{I} - UU^T) \\
&=  \mathbb{I} + U^T + U  + UU^T - U^T- UU^T  -  U - UU^T -\mathbb{I} + UU^T = \mathbb{O},
\end{align*}
we obtain,
$Y_1Y_1^T = \mathbb{I}$.
Similarly in above line, one can also show that if $( \mathbb{I}-U)$ is invertible, then $Y_2 = D + W( \mathbb{I}-U)^{-1}V $ is an Orthogonal matrix of size $(m-t)\times (m-t)$. \qed
\end{proof}
Note that the inverse of $(\mathbb{I}+ U)$ will exist if eigen-value of $U$ is not equal to $-1$ and similarly inverse of $(\mathbb{I} - U)$ will exist if eigen value of $U$ is not equal to $1$. In fact, $U$ and $(\mathbb{I}\pm U)^{-1}$ commutes. Hence if $\nu$ is the eigen value of $U$ then $\frac{1}{\nu\pm 1}$ is the eigen vale of $(\mathbb{I}\pm U)^{-1}$. Hence, assuming the inverse of $(\mathbb{I}\pm U)$  exists, there is a simple form for the inverse of the matrix $(\mathbb{I} \pm U)$ in terms of power $U$, when $||U|| < 1$. This states that $(\mathbb{I} \pm U)^{-1} = \mathbb{I} \mp U + U^2 \mp U^3 + U^4 \ldots$ 

A particular form of matrix $(\mathbb{I} \pm U)$ is relevant  for demonstrating specific constructions, when it is of the form $(\mathbb{I} + \frac{1}{\alpha}X)$, where $X$ is a $t \times t$ matrix. Towards this we have following result.
\begin{lemma}  
\label{ch6:lemma:inverse}
Let $\kappa, \gamma, \alpha, \vartheta$ be real.
\begin{enumerate}
\item If $X^2 = \kappa \mathbb{I} + \gamma X$, then the  inverse of the matrix $\left(\mathbb{I} + \frac{1}{\alpha} X\right)$ exists if $\alpha^2 + \gamma \alpha - \kappa \neq 0$ and is given by  $ \frac{\alpha( \alpha +\gamma)}{\alpha^2 + \gamma \alpha - \kappa}\left(\mathbb{I} - \frac{1}{\alpha + \gamma} X \right)$.

\item If $X^3 = \kappa \mathbb{I} + \gamma X + \vartheta X^2 $, then the inverse of the matrix $\left(\mathbb{I} + \frac{1}{\alpha} X \right)$ exists if $\alpha^3 + \vartheta \alpha^2 - \gamma \alpha + \kappa \neq 0$ and is given by  $ \frac{\alpha(\alpha( \alpha +\vartheta) - \gamma)}{\alpha^3+ \vartheta \alpha^2 - \gamma \alpha + \kappa}\left(  \mathbb{I} -  \frac{ \alpha +\vartheta}{\alpha( \alpha +\vartheta) - \gamma } X + \frac{1}{\alpha( \alpha +\vartheta) - \gamma} X^2\right)$.
\end{enumerate}
\end{lemma}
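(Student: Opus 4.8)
The plan is to prove both parts by a single device. Since $\mathbb{I}+\frac{1}{\alpha}X=\frac{1}{\alpha}(\alpha\mathbb{I}+X)$, it suffices to invert $\alpha\mathbb{I}+X$ and multiply the result by $\alpha$. The hypothesis says precisely that $X$ annihilates a fixed monic polynomial: $m(x)=x^2-\gamma x-\kappa$ in part (1) and $m(x)=x^3-\vartheta x^2-\gamma x-\kappa$ in part (2). Performing the (explicit, short) polynomial division of $m(x)$ by the linear factor $x+\alpha$ gives $m(x)=(x+\alpha)\,q(x)+m(-\alpha)$ with $q$ of degree one less than $m$; evaluating at $X$ and using $m(X)=\mathbb{O}$ yields the matrix identity $(\alpha\mathbb{I}+X)\,q(X)=-m(-\alpha)\,\mathbb{I}$. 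Hence, exactly when $m(-\alpha)\neq 0$ — and $m(-\alpha)=\alpha^2+\gamma\alpha-\kappa$ in part (1), $m(-\alpha)=-(\alpha^3+\vartheta\alpha^2-\gamma\alpha+\kappa)$ in part (2), so this is the stated condition — the matrix $\alpha\mathbb{I}+X$ is invertible with $(\alpha\mathbb{I}+X)^{-1}=-\tfrac{1}{m(-\alpha)}q(X)$, and therefore $\left(\mathbb{I}+\frac{1}{\alpha}X\right)^{-1}=-\tfrac{\alpha}{m(-\alpha)}q(X)$. Since $q(X)$ is a polynomial in $X$ it commutes with $\alpha\mathbb{I}+X$, so this is a genuine two-sided inverse; no separate invertibility argument is needed, as the identity already exhibits the inverse.

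The remaining work is the (routine) bookkeeping of the two divisions and the regrouping into the printed form. For part (1) the quotient is $q(x)=x-(\alpha+\gamma)$, giving $\left(\mathbb{I}+\frac{1}{\alpha}X\right)^{-1}=\frac{\alpha}{\alpha^2+\gamma\alpha-\kappa}\left((\alpha+\gamma)\mathbb{I}-X\right)$, which on factoring out $\alpha+\gamma$ is the claimed $\frac{\alpha(\alpha+\gamma)}{\alpha^2+\gamma\alpha-\kappa}\left(\mathbb{I}-\frac{1}{\alpha+\gamma}X\right)$. For part (2) division of $x^3-\vartheta x^2-\gamma x-\kappa$ by $x+\alpha$ gives $q(x)=x^2-(\alpha+\vartheta)x+\left(\alpha(\alpha+\vartheta)-\gamma\right)$ with remainder $-(\alpha^3+\vartheta\alpha^2-\gamma\alpha+\kappa)$, and substituting $X$ and pulling $\alpha(\alpha+\vartheta)-\gamma$ out of $q(X)$ reproduces exactly $\frac{\alpha(\alpha(\alpha+\vartheta)-\gamma)}{\alpha^3+\vartheta\alpha^2-\gamma\alpha+\kappa}\left(\mathbb{I}-\frac{\alpha+\vartheta}{\alpha(\alpha+\vartheta)-\gamma}X+\frac{1}{\alpha(\alpha+\vartheta)-\gamma}X^2\right)$.

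The one genuinely delicate point — worth a remark rather than real computation — is that the \emph{factored} expressions in the statement carry the coefficients $\frac{1}{\alpha+\gamma}$ (part 1) and $\frac{1}{\alpha(\alpha+\vartheta)-\gamma}$ (part 2), which are meaningful only when those quantities are nonzero, whereas the hypothesis $m(-\alpha)\neq 0$ by itself does not rule out $\alpha+\gamma=0$ or $\alpha(\alpha+\vartheta)-\gamma=0$. The clean resolution is to keep the inverse in the expanded form $-\tfrac{\alpha}{m(-\alpha)}q(X)$ — i.e. $\frac{\alpha}{\alpha^2+\gamma\alpha-\kappa}\left((\alpha+\gamma)\mathbb{I}-X\right)$, respectively $\frac{\alpha}{\alpha^3+\vartheta\alpha^2-\gamma\alpha+\kappa}\left(X^2-(\alpha+\vartheta)X+(\alpha(\alpha+\vartheta)-\gamma)\mathbb{I}\right)$ — which is well defined under the stated nonvanishing condition alone and reduces to the factored form whenever the latter is legitimate; one just notes that distributing the factored expression returns exactly this polynomial. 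If one prefers to avoid invoking polynomial division explicitly, the fully self-contained alternative is to posit the ansatz $a\mathbb{I}+bX$ (resp. $a\mathbb{I}+bX+cX^2$) for the inverse, multiply by $\mathbb{I}+\frac{1}{\alpha}X$, eliminate powers of $X$ above the first (resp. second) via the hypothesis, and equate coefficients of $\mathbb{I},X$ (resp. $\mathbb{I},X,X^2$) with those of $\mathbb{I}$; this gives a linear system whose determinant is $m(-\alpha)$ up to a nonzero scalar, so it is solvable precisely under the stated condition and its solution is the asserted inverse.
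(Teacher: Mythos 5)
Your proposal is correct and is essentially the paper's own argument: the paper verifies directly that $(\alpha\mathbb{I}+X)\bigl((\gamma+\alpha)\mathbb{I}-X\bigr)=(\alpha^2+\gamma\alpha-\kappa)\mathbb{I}$ (and the analogous cubic identity), which is exactly the identity $(\alpha\mathbb{I}+X)\,q(X)=-m(-\alpha)\,\mathbb{I}$ that your polynomial-division framing produces, with the same quotient. Your closing caveat — that the factored forms as printed additionally presuppose $\alpha+\gamma\neq 0$, respectively $\alpha(\alpha+\vartheta)-\gamma\neq 0$, and that the expanded form $\frac{\alpha}{\alpha^2+\gamma\alpha-\kappa}\bigl((\alpha+\gamma)\mathbb{I}-X\bigr)$ is the safer statement — is a fair and correct observation that the paper does not address, though it is immaterial for the paper's application where $\alpha=\pm\sqrt{4n}$ keeps these quantities nonzero.
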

\begin{proof} The proofs are as follows.
\begin{enumerate}
\item When $X^2 = \kappa \mathbb{I} + \gamma X$ consider $\left( \alpha \mathbb{I} +  X\right)  \left( (\gamma+ \alpha)\mathbb{I}- X \right) 
= \alpha (\alpha + \gamma)  \mathbb{I} + \gamma X - X^2    = (\alpha^2 + \gamma \alpha - \kappa)\mathbb{I}$. Hence if $\alpha^2 + \gamma \alpha - \kappa \neq 0$, the first result follows. 

\item When $X^3 = \kappa \mathbb{I} + \gamma X + \vartheta X^2 $, consider $\left( \alpha \mathbb{I} +  X\right) \left( (\alpha( \alpha +\vartheta) - \gamma) \mathbb{I} - ( \alpha +\vartheta) X + X^2\right)
=  (\alpha^3 + \vartheta \alpha^2- \gamma \alpha)  \mathbb{I} -  \gamma X  - \vartheta X^2  + X^3  
= (\alpha^3+ \vartheta \alpha^2 - \gamma \alpha + \kappa) \mathbb{I}$ . Hence if $\alpha^3+ \vartheta \alpha^2 - \gamma \alpha + \kappa \neq 0$, the second result follows. \qed
\end{enumerate}
\end{proof}
In the above lemma, if $X$ satisfies the item 1, then we should use that, and one may note that the solution from item 2 will be the same. In case, item 1 is not satisfied, then we should exploit item 2. This is because, the computation in the second case will be more involved.

Note that when $X$ is $2\times 2$ matrix, we will have $X^2 = \gamma X + \kappa \mathbb{I}$, where $\gamma = \text{Tr}(X)$ and $\kappa = -\text{Det}(X)$. When $X$ is a $3 \times 3$ matrix, $X^3 = \vartheta X^2 + \gamma X + \kappa \mathbb{I} $ where $\vartheta = \text{Tr}(X)$, $ \gamma = \sum_{i>j}  (X)_{ij} (X)_{ji} - (X)_{ii}(X)_{jj}$ and $\kappa = \text{Det}(X)$.

Let us now concentrate on eigen value characterization.
When $X^2 = \kappa \mathbb{I} + \gamma X$ then the eigen values of $X$ are the roots of the equation $\lambda^2 -\gamma \lambda - \kappa =0$, and hence has maximum two different values, say $\{\lambda_1, \lambda_2\}$. Thus, $X$ is similar to  diagonal matrix $\text{Diag}(\lambda_1, \ldots, \lambda_1 , \lambda_2, \ldots \lambda_2)$. Similarly when $X^3 = \kappa \mathbb{I} + \gamma X + \vartheta X^2$ then eigen values of $X$ are the roots of the equation $\lambda^3 - \vartheta \lambda^2 -\gamma \lambda - \kappa = 0$ and in this case  $X$ has maximum three different e-values, say $\{\lambda_1, \lambda_2, \lambda_3 \}$ and hence $X$ is similar to  diagonal matrix $\text{Diag}(\lambda_1, \ldots \lambda_1 , \lambda_2, \ldots \lambda_2, \lambda_3, \ldots \lambda_3)$.

We use this and apply the above Lemma \ref{ch6:Lem:Reduc} for the case when $N$ is a real Hadamard matrix of order $m = 4n$, denoting it by $H_{4n}$ and obtain the following result.

Note that, in Lemma \ref{ch6:Lem:Reduc}, there is no $\frac{1}{\sqrt{4n}}$ before the matrix, that we start using from the following result. This is by abuse of notation. 
Lemma \ref{ch6:Lem:Reduc} is only used to prove the orthogonality and thus, we did not use the constant term outside as in the Hadamard matrix. Thus, the $U, V, W, D$ as follows actually differ from those of Lemma \ref{ch6:Lem:Reduc} by a constant multiplier.

\begin{theorem}
\label{ch6:th:ReducH}
Let $H_{4n}= \frac{1}{\sqrt{4n}} \left[ \begin{tabular}{ c  c }
$U_{t\times t} $& ${V}_{t \times (4n-t)}$ \\
$W_{(4n-t)\times t}$ & $D_{(4n-t) \times (4n-t)}$ \\
\end{tabular} \right]_{n \times n}$. 
If $t < \sqrt{n}$ then $Y_1= D - W(\mathbb{I} + U)^{-1}$ and $Y_2= D + W(\mathbb{I} - U)^{-1}V$  are  $\epsilon$-Hadamard matrices of order $(4n-t)$, such that  $\frac{1}{\sqrt{4n}} \left(1- \frac{t}{\sqrt{4n} -t} \right)  \leq  |(Y_{1,2})_{ij}| \leq \frac{1}{\sqrt{4n}} \left(1+ \frac{t}{\sqrt{4n} -t} \right)$ with $\epsilon =  \frac{t}{\sqrt{k} } + \mathcal{O}(k^{-1}) \leq 1$. 
\end{theorem}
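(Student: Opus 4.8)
The orthogonality of $Y_1$ and $Y_2$ will come for free from Lemma~\ref{ch6:Lem:Reduc}. Under the abuse of notation fixed just above the statement, the four blocks of the orthogonal matrix $H_{4n}$ are $\frac{1}{\sqrt{4n}}U$, $\frac{1}{\sqrt{4n}}V$, $\frac{1}{\sqrt{4n}}W$, $\frac{1}{\sqrt{4n}}D$, where $U,V,W,D$ have all entries $\pm 1$. So the plan is: first verify that $\mathbb{I}\pm\frac{1}{\sqrt{4n}}U$ is nonsingular; then Lemma~\ref{ch6:Lem:Reduc} immediately gives that $Y_1 = \frac{1}{\sqrt{4n}}\bigl(D-\frac{1}{\sqrt{4n}}W(\mathbb{I}+\frac{1}{\sqrt{4n}}U)^{-1}V\bigr)$ and the analogous $Y_2$ (with $\mathbb{I}-\frac{1}{\sqrt{4n}}U$) are orthogonal matrices of order $4n-t$, and everything after that is an entrywise size estimate. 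For the invertibility I would use only the crude bound $\|U\|\le t$ (the operator norm of a $t\times t$ sign matrix is at most its Frobenius norm $\sqrt{t^2}=t$): since $t<\sqrt n$ this gives $\bigl\|\frac{1}{\sqrt{4n}}U\bigr\|\le\frac{t}{2\sqrt n}<\frac12<1$, so both $\mathbb{I}\pm\frac{1}{\sqrt{4n}}U$ are invertible and, more usefully, $\bigl\|(\mathbb{I}\pm\frac{1}{\sqrt{4n}}U)^{-1}\bigr\|\le\frac{1}{1-t/\sqrt{4n}}=\frac{\sqrt{4n}}{\sqrt{4n}-t}$ by the Neumann series.

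For the entry bounds, fix $(i,j)$ and set $A=(\mathbb{I}+\frac{1}{\sqrt{4n}}U)^{-1}$. Writing $w_i^{T}$ for the $i$-th row of $W$ and $v_j$ for the $j$-th column of $V$ --- each a $\pm1$ vector of length $t$, hence of Euclidean norm $\sqrt t$ --- we have $(WAV)_{ij}=w_i^{T}A\,v_j$, so $|(WAV)_{ij}|\le\|w_i\|\,\|A\|\,\|v_j\|\le t\cdot\frac{\sqrt{4n}}{\sqrt{4n}-t}$. Since $|D_{ij}|=1$, the scalar $D_{ij}-\frac{1}{\sqrt{4n}}(WAV)_{ij}$ has modulus in $\bigl[1-\frac{t}{\sqrt{4n}-t},\,1+\frac{t}{\sqrt{4n}-t}\bigr]$, and multiplying by $\frac{1}{\sqrt{4n}}$ yields exactly the two-sided bound on $|(Y_1)_{ij}|$ claimed in the theorem; the same computation with $(\mathbb{I}-\frac{1}{\sqrt{4n}}U)^{-1}$ handles $Y_2$.

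To read off the $\epsilon$-Hadamard conclusion, put $k=4n-t$ and write $|(Y_{1,2})_{ij}|=\frac{1+\epsilon_{ij}}{\sqrt k}$, so that $1+\epsilon_{ij}=\sqrt{k/4n}\cdot\bigl(\sqrt{4n}\,|(Y_{1,2})_{ij}|\bigr)$ with $\sqrt{k/4n}=(1-t/4n)^{1/2}=1-O(1/n)$ and $\sqrt{4n}\,|(Y_{1,2})_{ij}|\in\bigl[1-\frac{t}{\sqrt{4n}-t},\,1+\frac{t}{\sqrt{4n}-t}\bigr]$. This gives $|\epsilon_{ij}|\le\frac{t}{\sqrt{4n}-t}+O(1/n)$, and expanding $\frac{t}{\sqrt{4n}-t}=\frac{t}{\sqrt{4n}}+O(t^2/n)$ followed by $\frac{t}{\sqrt{4n}}=\frac{t}{\sqrt k}(1-t/4n)^{1/2}=\frac{t}{\sqrt k}+O(k^{-1})$ (with $t\in\{1,2,3\}$ treated as $O(1)$) yields $\epsilon=\max_{i,j}|\epsilon_{ij}|=\frac{t}{\sqrt k}+O(k^{-1})$. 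The hypothesis $t<\sqrt n$ is precisely what makes $\sqrt{4n}-t=2\sqrt n-t>\sqrt n>t$, i.e.\ $\frac{t}{\sqrt{4n}-t}<1$, so $\epsilon<1$ and $Y_1,Y_2$ satisfy Definition~\ref{ch6:defH} with order $k$.

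The one genuinely delicate point --- and the step I expect to absorb most of the work --- is the entrywise estimate: one must control the individual entries of $W(\mathbb{I}\pm\frac{1}{\sqrt{4n}}U)^{-1}V$ rather than just a norm of it, and it is the factorization $|w_i^{T}Av_j|\le\|w_i\|\,\|A\|\,\|v_j\|$ that produces the clean constant $\frac{t}{\sqrt{4n}-t}$. Everything else is bookkeeping: the crude norm bound on $U$, the Neumann-series bound on $(\mathbb{I}\pm\frac{1}{\sqrt{4n}}U)^{-1}$, and the $\sqrt{k/4n}$ rescaling together with the Taylor expansions needed to rewrite that constant in the advertised $\frac{t}{\sqrt k}+O(k^{-1})$ form.
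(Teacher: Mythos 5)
Your proposal is correct and follows essentially the same route as the paper: orthogonality via Lemma~\ref{ch6:Lem:Reduc}, invertibility of $\mathbb{I}\pm\tfrac{1}{\sqrt{4n}}U$ from the Neumann series under $t<\sqrt{4n}$, the two-sided entrywise bound $\tfrac{1}{\sqrt{4n}}\bigl(1\pm\tfrac{t}{\sqrt{4n}-t}\bigr)$, and then the rescaling to $k=4n-t$ to read off $\epsilon$. The only divergence is in packaging: where you bound $|w_i^{T}Av_j|\le\|w_i\|\,\|A\|\,\|v_j\|$ via an operator-norm estimate on the resolvent, the paper expands $(\mathbb{I}\pm U)^{-1}$ term by term and uses the entrywise bound $|(WU^{r}V)_{ij}|\le t^{r+1}$, summing the same geometric series to the identical constant $\tfrac{t}{\sqrt{4n}-t}$.
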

\begin{proof}
Here $U,V,W$ and $D$ sub sub matrix of a Hadamard matrix $H_{4n}$ as explained in Lemma~\ref{ch6:Lem:Reduc}. We have $(U)_{ij} = \frac{\pm 1}{\sqrt{4n}}$. Hence, $(\mathbb{I}\pm U)$ would be diagonally dominant matrix if $t < \sqrt{4n}$ and thus would be non-singular. That is why $(\mathbb{I} \pm U)$ would be invertible.
 
When all the entries of $U$ are identical i.e., $\frac{+1}{\sqrt{4n}}$ (or $\frac{-1}{\sqrt{4n}}$), then $(U^r)_{ij}=  \frac{t^{r-1}}{(\sqrt{4n})^r}$ (or $\frac{(-t)^{r-1}}{(\sqrt{4n})^r})$ respectively, else  we will have $|(U^r)_{ij}|\leq \frac{t^{r-1}}{(\sqrt{4n})^r}$. Now, for the convergence of the series for $(\mathbb{I} \pm U)^{-1} = \mathbb{I} \mp U + U^2 \mp U^3 + U^4 \ldots$, each entry of the series should converge. For this the sufficient condition is $\frac{t^{r-1}}{(\sqrt{4n})^r} < 1 \ \implies \ t < \sqrt{4n}$, which is also the condition for diagonal dominance. This diagonal dominance implies the invertibility of the matrix. Thus any square sub matrix $U$ of size $t < \sqrt{4n}$ of Hadamard matrix of order $4n$, will have valid series expansion for $(\mathbb{I} - U)^{-1}$ or $(\mathbb{I} + U)^{-1}$ in terms of the powers of $U$. Hence using Lemma \ref{ch6:Lem:Reduc} above we have $Y_1$ and $Y_2$ as $Y_1= D - W(\mathbb{I} + U)^{-1}V = \frac{1}{\sqrt{4n}}\left(\textsf{D} -\frac{1}{\sqrt{4n}} WV + \frac{1}{4n}WUV-\frac{1}{(4n)^{\frac{3}{2}}}WU^2V \ldots \right)$ and $Y_2 = D + W(\mathbb{I} - U)^{-1}V = \frac{1} {\sqrt{4n}}\left(\textsf{D} +\frac{1}{\sqrt{4n}} WV + \frac{1}{4n}WUV + \frac{1}{(4n)^{\frac{3}{2}}}WU^2V \ldots \right)$.

Now, in order to get the bound on $(Y_{1,2})_{ij}$, note that all the matrices $U, V, W, D$ have the entries $\pm 1$ only. Therefore, $|(WV)_{ij}| \leq t$, $|(WUV)_{ij}| \leq t^2$ and in general $|(WU^rV)_{ij}| \leq t^{r+1}$. Hence, the absolute value of the entries in the matrix would be bounded above, which implies $|(Y_{1,2})_{ij} | \leq \frac{1}{\sqrt{4n}}\left( 1+\frac{t}{\sqrt{4n}}+(\frac{t}{\sqrt{4n}})^2+(\frac{t}{\sqrt{4n}})^3\ldots \right) = \frac{1}{\sqrt{4n}} \left(1+ \frac{t}{\sqrt{4n} -t} \right)$ for $t < \sqrt{4n}$. Similarly the absolute values of the entries in the matrix $Y_{q}$ is bounded below, which would be given by  $|(Y_{q})_{ij} | \geq \frac{1}{\sqrt{4n}}\left( 1-\frac{t}{\sqrt{4n}}-(\frac{t}{\sqrt{4n}})^2-(\frac{t}{\sqrt{4n}})^3\ldots \right) = \frac{1}{\sqrt{4n}} \left(1- \frac{t}{\sqrt{4n} -t} \right)$. 
Now using $k = 4n-t$, we have $|(Y_{1,2})_{ij} | \leq \frac{1}{\sqrt{k+t}} \left(1+ \frac{t}{\sqrt{k+t} -t} \right) = \frac{1}{\sqrt{k+t} -t} = \frac{1+\epsilon}{\sqrt{k} }  $  where $\epsilon = \frac{\sqrt{k}}{\sqrt{k+t}-t} -1$. Hence in order to have $\epsilon \leq 1$ we get inequality $9k^2 +24kt + 16t^2(t-1)^2 - 40 kt^2 \geq 0$. Since $k,t > 0$, thus in this inequality only last term is negative. Hence in order to get a simple form, we may ignore the intermediate terms , then we get $9 k^2 - 40 kt^2 \geq 0 \Rightarrow t \leq \sqrt{\frac{9 k}{40}}< \sqrt{\frac{ k}{4}}< \sqrt{n}$. Thus when $t< \sqrt{n}$ the $Y_{1,2}$ are $\epsilon$-Hadamard matrices, with $\epsilon < 1$.  And in terms of the order ($k$) of the $\epsilon$-Hadamard matrices , we have    $|(Y_{1,2})_{ij} | \leq \frac{1}{\sqrt{k}}\left(1 + \frac{t}{\sqrt{k}} +\mathcal{O}\left( k^{-1} \right) \right)$. 
\flushright \qed
\end{proof}

As there are various options for real Hadamard matrices, when they exist, for $t \geq 1$, there are various possibilities of $U$ too and corresponding to each possibility, we will have different $Y_1, Y_2$. Thus, we should consider which one can be chosen for the best result among the  various possibilities of $U$,  we are considering. 
 
\subsection{Explaining the cases with $t = 1, 2, 3$}
\label{ch6:subsection:cases_for_hadamard}
We present explicit constructions of orthogonal matrices $Y_{(4n - t) \times (4n - t)}$ (i.e., $Y_1$ or $Y_2$ as in Theorem~\ref{ch6:th:ReducH}) by considering the block matrix $U_{t \times t}$. The matrix $U$ denotes a $t \times t$ matrix whose entries are all $\pm1$. 
For $t = 1$, we have $U = \pm1$. Given one element, without loss of generality, fix $U_1 = 1$. For $t = 2$, as there are 4 elements, we have $2^4 = 16$ possible configurations of $U$. For $t = 3$, there are $2^9 = 512$ possible configurations.

We have examined several (not all) cases out of these and here we present the results for the $\epsilon$-Hadamard matrix to have the least $\epsilon$ among the ones we considered below. As we have pointed out, $H_{4n}$ be an orthogonal matrix of the form
\[
H_{4n} = \frac{1}{\sqrt{4n}} 
\begin{bmatrix}
U_{t \times t} & V_{t \times (4n - t)} \\
W_{(4n - t) \times t} & D_{(4n - t) \times (4n - t)}
\end{bmatrix},
\]
where $t < \sqrt{4n}$ and all submatrices are real. Then the following cases are to be discussed one by one.

\subsubsection{The case $t=1$}
If $t=1$, i.e., $U = [1]$, then we construct
\[
Y_1 = \frac{1}{\sqrt{4n}} \left( D + \frac{1}{\sqrt{4n} - 1} W V \right), \quad
Y_2 = \frac{1}{\sqrt{4n}} \left( D - \frac{1}{\sqrt{4n} + 1} W V \right).
\]
Both $Y_1$ and $Y_2$ are orthonormal matrices of order $(4n - 1)$ with entries satisfying:
\[
|(Y_1)_{ij}| \in \frac{1}{\sqrt{4n}} \left\{ 1 + \frac{1}{\sqrt{4n} - 1}, \, 1 - \frac{1}{\sqrt{4n} - 1} \right\}, \quad
|(Y_2)_{ij}| \in \frac{1}{\sqrt{4n}} \left\{ 1 + \frac{1}{\sqrt{4n} + 1}, \, 1 - \frac{1}{\sqrt{4n} + 1} \right\}.
\]
$Y_2$ is the $\epsilon-$Hadamard matrix closest to Hadamard matrix, as $\epsilon$ corresponding to $Y_2$ is lesser than that of $Y_1$.

Here our construction yields an $\epsilon$-Hadamard matrix with $$\epsilon = \frac{\sqrt{4n-1}}{\sqrt{4n}}  \left(\frac{\sqrt{4n}+2}{\sqrt{4n}+1} \right)-1 < \frac{1}{2\sqrt{n}},$$ as the best one assuming $D, W$ and $V$ are any possible sub-matrices of a Hadamard matrix.

\subsubsection{The case $t=2$}
If $t = 2$, there are a total of $2^4 = 16$ possible configurations for the matrix $U$. When
\[
U \in \left\{
\begin{bmatrix}
1 & -1 \\
1 & 1
\end{bmatrix}, \quad
\begin{bmatrix}
1 & 1 \\
-1 & 1
\end{bmatrix}
\right\},
\]
we have $U^2 = 2U - 2\mathbb{I}$. This implies the following matrix inverses:

\begin{align*}
\left( \mathbb{I} + \frac{1}{\sqrt{4n}} U \right)^{-1} 
&= \frac{4n + 2\sqrt{4n}}{4n + 2\sqrt{4n} + 2}
   \left( \mathbb{I} - \frac{1}{\sqrt{4n} + 2} U \right), \\
\left( \mathbb{I} - \frac{1}{\sqrt{4n}} U \right)^{-1} 
&= \frac{4n - 2\sqrt{4n}}{4n - 2\sqrt{4n} + 2}
   \left( \mathbb{I} + \frac{1}{\sqrt{4n} - 2} U \right).
\end{align*}
Hence, we define the matrices:
\[
Y_1 = \frac{1}{\sqrt{4n}} \left( D - \frac{\sqrt{4n}+2}{4n+2\sqrt{4n}+2} W V + \frac{1}{4n + 2\sqrt{4n}+2} W U V \right),
\]
\[
Y_2 = \frac{1}{\sqrt{4n}} \left( D + \frac{\sqrt{4n}-2}{4n-2\sqrt{4n}+2} W V + \frac{1}{4n - 2\sqrt{4n}+2} W U V \right).
\]
Both $Y_1$ and $Y_2$ are orthonormal matrices of order $(4n-2)$. Here,$Y_2$ is the $\epsilon-$Hadamard matrix closest to Hadamard matrix, $\epsilon$ of $Y_2$ is lesser than that of $Y_1$. When
\[
U \in \left\{
\begin{bmatrix}
-1 & -1 \\
1 & -1
\end{bmatrix}, \quad
\begin{bmatrix}
-1 & 1 \\
-1 & -1
\end{bmatrix}
\right\},
\]
we have $U^2 = -2U - 2\mathbb{I}$. This implies the following matrix inverses:
\begin{align*}
\left( \mathbb{I} + \frac{1}{\sqrt{4n}} U \right)^{-1} 
&= \frac{4n - 2\sqrt{4n}}{4n - 2\sqrt{4n} + 2}
   \left( \mathbb{I} - \frac{1}{\sqrt{4n} - 2} U \right), \\
\left( \mathbb{I} - \frac{1}{\sqrt{4n}} U \right)^{-1} 
&= \frac{4n + 2\sqrt{4n}}{4n + 2\sqrt{4n} + 2}
   \left( \mathbb{I} + \frac{1}{\sqrt{4n} + 2} U \right).
\end{align*}
Hence, we define the matrices:
\[
Y_1 = \frac{1}{\sqrt{4n}} \left( D - \frac{\sqrt{4n}-2}{4n-2\sqrt{4n}+2} W V + \frac{1}{4n - 2\sqrt{4n}+2} W U V \right),
\]
\[
Y_2 = \frac{1}{\sqrt{4n}} \left(D + \frac{\sqrt{4n}+2}{4n+2\sqrt{4n}+2} W V + \frac{1}{4n + 2\sqrt{4n}+2} W U V \right).
\]
Both $Y_1$ and $Y_2$ are orthonormal matrices of order $(4n-2)$. $Y_1$ is the $\epsilon-$Hadamard matrix closest to Hadamard matrix, $\epsilon$ of $Y_1$ is lesser than that of $Y_2$.

One may note that, $Y_2$ of the first case, i.e., for $U^2 = 2U - 2\mathbb{I}$ and $Y_1$ in the second case, i.e., for $U^2 = -2U - 2\mathbb{I}$, have the same $\epsilon$ hence are two matrices are equally close to a Hadamard matrix.

In this case, our construction yields an $\epsilon$-Hadamard matrix with $$\epsilon = \frac{\sqrt{4n-2}}{\sqrt{4n}}  \left(\frac{4n+2}{4n-2\sqrt{4n}+2} \right)-1 < \frac{2}{\sqrt{n}},$$  as the best result assuming $D, W$ and $V$ are any possible sub-matrices of a Hadamard matrix.

\subsubsection{The case $t=3$}
If $t = 3$, there are a total of $2^9 = 512$ possible configurations for the matrix $U$, where each entry is either $+1$ or $-1$. When

\begin{equation*}
U = \left\{ \begin{split}
\begin{bmatrix}
1 & 1 &1\\
1 & -1 & 1\\
1 & 1 & -1\\
\end{bmatrix},  
& \begin{bmatrix}
-1 & 1 &1\\
1 & 1 & 1\\
1 & 1 & -1\\
\end{bmatrix}, 
 \begin{bmatrix}
-1 & 1 &1\\
1 & -1 & 1\\
1 & 1 & 1\\
\end{bmatrix}, 
 \begin{bmatrix}
-1 & -1 &1\\
-1 & 1 & -1\\
1 & -1 & -1\\
\end{bmatrix}, 
 \begin{bmatrix}
1 & -1 &1\\
-1 & -1 & -1\\
1 & -1 & -1\\
\end{bmatrix}, 
 \begin{bmatrix}
-1 & -1 &1\\
-1 & -1 & -1\\
1 & -1 & 1\\
\end{bmatrix}, \\
 \begin{bmatrix}
1 & 1 &-1\\
1 & -1 & -1\\
-1 & -1 & -1\\
\end{bmatrix},
& \begin{bmatrix}
-1 & 1 &-1\\
1 & -1 & -1\\
-1 & -1 & 1\\
\end{bmatrix},
 \begin{bmatrix}
-1 & 1 &-1\\
1 & 1 & -1\\
-1 & -1 & -1\\
\end{bmatrix}, 
\begin{bmatrix}
-1 & -1 &-1\\
-1 & -1 & 1\\
-1 & 1 & 1\\
\end{bmatrix}, 
 \begin{bmatrix}
-1 & -1 &-1\\
-1 & 1 & 1\\
-1 & 1 & -1\\
\end{bmatrix},
 \begin{bmatrix}
1 & -1 &-1\\
-1 & -1 & 1\\
-1 & 1 & -1\\
\end{bmatrix}, 
\end{split} \right\}
\end{equation*}

we have $U^3 = -U^2 + 4U + 4\mathbb{I}$. This implies the following matrix inverses:
\begin{align*}
\left( \mathbb{I} + \frac{1}{\sqrt{4n}} U \right)^{-1} 
&= \frac{4n\sqrt{4n} - 4n - 4\sqrt{4n}}{4n\sqrt{4n} - 4n - 4\sqrt{4n} + 4}
   \left( \mathbb{I} - \frac{\sqrt{4n} - 1}{4n - \sqrt{4n} - 4} U 
   + \frac{1}{4n - \sqrt{4n} - 4} U^2 \right), \\
\left( \mathbb{I} - \frac{1}{\sqrt{4n}} U \right)^{-1} 
&= \frac{4n\sqrt{4n} + 4n - 4\sqrt{4n}}{4n\sqrt{4n} + 4n - 4\sqrt{4n} - 4}
   \left( \mathbb{I} + \frac{\sqrt{4n} + 1}{4n + \sqrt{4n} - 4} U 
   + \frac{1}{4n + \sqrt{4n} - 4} U^2 \right).
\end{align*}

\noindent
Hence, we define the matrices:
\begin{align*}
Y_1 &= \frac{1}{\sqrt{4n}} \left(
        D 
      - \frac{4n - \sqrt{4n} - 4}{4n\sqrt{4n} - 4n - 4\sqrt{4n} + 4} W V 
      + \frac{\sqrt{4n} - 1}{4n\sqrt{4n} - 4n - 4\sqrt{4n} + 4} W U V \right.  \\
    &\quad \left.
      - \frac{1}{4n\sqrt{4n} - 4n - 4\sqrt{4n} + 4} W U^2 V 
      \right), \\
Y_2 &= \frac{1}{\sqrt{4n}} \left(
        D 
      + \frac{4n + \sqrt{4n} - 4}{4n\sqrt{4n} + 4n - 4\sqrt{4n} - 4}W V 
      + \frac{\sqrt{4n} + 1}{4n\sqrt{4n} + 4n - 4\sqrt{4n} - 4} W U V \right. \\
    &\quad \left.
      + \frac{1}{4n\sqrt{4n} + 4n - 4\sqrt{4n} - 4} W U^2 V 
      \right).
\end{align*}

Then $Y_1$ and $Y_2$ are orthonormal matrices of order $(4n - 3)$. $Y_1$ is the $\epsilon-$Hadamard matrix closest to Hadamard matrix, $\epsilon$ of $Y_1$ is lesser than that of $Y_2$.\\
When
\[
U \in
\left\{
\begin{aligned}
  &\begin{bmatrix}
    1 & -1 & 1\\
   -1 &  1 & 1\\
    1 &  1 & -1
  \end{bmatrix},\quad
  \begin{bmatrix}
    1 & 1 & 1\\
    1 & 1 & -1\\
    1 & -1 & -1
  \end{bmatrix},\quad
  \begin{bmatrix}
    1 & 1 & -1\\
    1 & 1 & 1\\
   -1 & 1 & -1
  \end{bmatrix},\quad
  \begin{bmatrix}
    1 & -1 & -1\\
   -1 &  1 & -1\\
   -1 & -1 & -1
  \end{bmatrix}, \\[1ex]
  &\begin{bmatrix}
    1 & -1 & 1\\
   -1 & -1 & 1\\
    1 &  1 & 1
  \end{bmatrix},\quad
  \begin{bmatrix}
    1 & 1 & 1\\
    1 & -1 & -1\\
    1 & -1 & 1
  \end{bmatrix},\quad
  \begin{bmatrix}
    1 & 1 & -1\\
    1 & -1 & 1\\
   -1 & 1 & 1
  \end{bmatrix},\quad
  \begin{bmatrix}
    1 & -1 & -1\\
   -1 & -1 & -1\\
   -1 & -1 & 1
  \end{bmatrix}
\end{aligned}
\right\}.
\]
we have  $\textsf{U}^3 = \textsf{U}^2+ 4\textsf{U} -4\mathbb{I}$. This implies the following matrix inverses:
\begin{align*}
\left( \mathbb{I} + \frac{1}{\sqrt{4n}} U \right)^{-1} 
&= \frac{4n\sqrt{4n} + 4n - 4\sqrt{4n}}{4n\sqrt{4n} + 4n - 4\sqrt{4n} - 4}
   \left( \mathbb{I} - \frac{\sqrt{4n} + 1}{4n + \sqrt{4n} - 4} U 
   + \frac{1}{4n + \sqrt{4n} - 4} U^2 \right), \\
\left( \mathbb{I} - \frac{1}{\sqrt{4n}} U \right)^{-1} 
&= \frac{4n\sqrt{4n} - 4n - 4\sqrt{4n}}{4n\sqrt{4n} - 4n - 4\sqrt{4n} + 4}
   \left( \mathbb{I} + \frac{\sqrt{4n} - 1}{4n - \sqrt{4n} - 4} U 
   + \frac{1}{4n - \sqrt{4n} - 4} U^2 \right).
\end{align*}
\noindent
Hence, we define the matrices:
\begin{align*}
Y_1 &= \frac{1}{\sqrt{4n}} \left(
        D 
      - \frac{4n + \sqrt{4n} - 4}{4n\sqrt{4n} + 4n - 4\sqrt{4n} - 4} W V 
      + \frac{\sqrt{4n} + 1}{4n\sqrt{4n} + 4n - 4\sqrt{4n} - 4} W U V \right.  \\
    &\quad \left.
      - \frac{1}{4n\sqrt{4n} + 4n - 4\sqrt{4n} - 4} W U^2 V 
      \right), \\
Y_2 &= \frac{1}{\sqrt{4n}} \left(
        D 
      + \frac{4n - \sqrt{4n} - 4}{4n\sqrt{4n} - 4n - 4\sqrt{4n} + 4}W V 
      + \frac{\sqrt{4n} - 1}{4n\sqrt{4n} - 4n - 4\sqrt{4n} + 4} W U V \right. \\
    &\quad \left.
      + \frac{1}{4n\sqrt{4n} - 4n - 4\sqrt{4n} + 4} W U^2 V 
      \right).
\end{align*}
Then $Y_1$ and $Y_2$ are orthonormal matrices of order $(4n - 3)$. $Y_2$ is the $\epsilon-$Hadamard matrix closest to Hadamard matrix, $\epsilon$ of $Y_2$ is lesser than that of $Y_1$.

One may note that, $Y_1$ of the first case and $Y_2$ of of the second case have the same $\epsilon$. Hence the two matrices are equally close to a Hadamard matrix.
In this case, our construction yields a $\epsilon$-Hadamard matrix with
\[
\epsilon = \frac{\sqrt{4n-3}}{\sqrt{4n}}
\left(
\frac{4n\sqrt{4n}+8n+2\sqrt{4n}+10}{4n\sqrt{4n}-4n-4\sqrt{4n}+4}
\right)
- 1 < \frac{4}{\sqrt{n}}, for\ n\ \geq 4,
\]
as the best result assuming $D$, $W$ and $V$ are any possible submatrix of a Hadamard matrix.

We can summarize these in the following results.
\begin{theorem}
\label{ch6:epsathm}
For $t = 1, 2, 3$, it is possible to construct real $\epsilon$-Hadamard matrices of dimension $4n-t$ with $\epsilon = \frac{\rho_t}{\sqrt{n}}$, where $\rho_t = \frac{1}{2}, 2$ and $4$ respectively for $t = 1, 2, 3$.
\end{theorem}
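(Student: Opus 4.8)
\emph{Proof proposal.}
The plan is to obtain Theorem~\ref{ch6:epsathm} as the synthesis of the three case analyses of Section~\ref{ch6:subsection:cases_for_hadamard}, each of which rests on Theorem~\ref{ch6:th:ReducH} together with the closed-form inverses of Lemma~\ref{ch6:lemma:inverse}. Assume the Hadamard conjecture, so that a real Hadamard matrix $H_{4n}$ exists, and fix $n$ with $t<\sqrt{n}$ so that Theorem~\ref{ch6:th:ReducH} applies. I would write $H_{4n}$ in the block form of Theorem~\ref{ch6:th:ReducH} with $t\times t$ top-left block $U$. Since permutations and sign changes of the rows and columns of $H_{4n}$ preserve orthogonality, and since every row of a Hadamard matrix is balanced, one can arrange that $U$ is a prescribed admissible $\pm1$ pattern: for $t=1$ take $U=[1]$; for $t=2$ take a pattern with $U^2=2U-2\mathbb{I}$; for $t=3$ take one of the patterns with $U^3=-U^2+4U+4\mathbb{I}$. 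That these identities hold for the chosen $U$ is a one-line Cayley--Hamilton computation from $\text{Tr}(U)$, $\det(U)$ and the relevant sum of $2\times2$ minors.

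Applying Theorem~\ref{ch6:th:ReducH}, the matrices $Y_1=D-W(\mathbb{I}+\frac{1}{\sqrt{4n}}U)^{-1}V$ and $Y_2=D+W(\mathbb{I}-\frac{1}{\sqrt{4n}}U)^{-1}V$ are already certified to be orthogonal of order $4n-t$ and $\epsilon$-Hadamard in the sense of Definition~\ref{ch6:defH}; only the size of $\epsilon$ needs to be made explicit with a clean constant. For this I would replace the geometric series $(\mathbb{I}\pm\frac{1}{\sqrt{4n}}U)^{-1}$ by the exact degree-$(t-1)$ polynomial in $U$ supplied by Lemma~\ref{ch6:lemma:inverse} (item~1 for $t=2$, item~2 for $t=3$; trivial for $t=1$), obtaining $Y_i=\frac{1}{\sqrt{4n}}\left(D+c_0WV+c_1WUV+c_2WU^2V\right)$ with the explicit scalar coefficients $c_0,c_1,c_2$ recorded in Section~\ref{ch6:subsection:cases_for_hadamard} (only $c_0$ present for $t=1$, and $c_0,c_1$ for $t=2$). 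Because every entry of $U,V,W,D$ equals $\pm1$, one has the crude bounds $|(WU^rV)_{ij}|\le t^{r+1}$, whence by the triangle inequality $|(Y_i)_{ij}|\le\frac{1}{\sqrt{4n}}\left(1+|c_0|t+|c_1|t^2+|c_2|t^3\right)$. Rewriting the right-hand side as $\frac{1+\epsilon}{\sqrt{4n-t}}$ and choosing in each case the variant with the smaller coefficient sum ($Y_2$ for $t=1$ and for the $U^2=2U-2\mathbb{I}$ pattern, $Y_1$ for the $U^3=-U^2+4U+4\mathbb{I}$ pattern) gives exactly the three expressions of Section~\ref{ch6:subsection:cases_for_hadamard},
\begin{align*}
\epsilon_{1}&=\frac{\sqrt{4n-1}}{\sqrt{4n}}\cdot\frac{\sqrt{4n}+2}{\sqrt{4n}+1}-1,\qquad
\epsilon_{2}=\frac{\sqrt{4n-2}}{\sqrt{4n}}\cdot\frac{4n+2}{4n-2\sqrt{4n}+2}-1,\\
\epsilon_{3}&=\frac{\sqrt{4n-3}}{\sqrt{4n}}\cdot\frac{4n\sqrt{4n}+8n+2\sqrt{4n}+10}{4n\sqrt{4n}-4n-4\sqrt{4n}+4}-1.
\end{align*}

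It remains to verify $\epsilon_{1}<\frac{1}{2\sqrt{n}}$, $\epsilon_{2}<\frac{2}{\sqrt{n}}$ and $\epsilon_{3}<\frac{4}{\sqrt{n}}$, which identifies $\rho_1=\frac12$, $\rho_2=2$, $\rho_3=4$ as the claimed constants on the $\mathcal{O}(n^{-1/2})$ term. The first is immediate: $\frac{\sqrt{4n-1}}{\sqrt{4n}}<1$ and $\frac{\sqrt{4n}+2}{\sqrt{4n}+1}=1+\frac{1}{2\sqrt{n}+1}<1+\frac{1}{2\sqrt{n}}$. For $t=2$, put $m=\sqrt{4n}$; discarding the factor $\frac{\sqrt{4n-2}}{\sqrt{4n}}<1$, the claim reduces after clearing the positive denominators to $m(m^2+2)\le(m+4)(m^2-2m+2)$, i.e.\ to $0\le2(m-2)^2$, which is strict for $m>2$ and hence holds for all $n\ge5$ (which is forced by $t<\sqrt n$). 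The genuinely tedious step, which I expect to be the main obstacle, is the $t=3$ inequality: here discarding the $\frac{\sqrt{4n-3}}{\sqrt{4n}}$ factor is too lossy for small $n$ (it already fails at $n=4$), so one must keep it, isolate $\sqrt{4n-3}$, and square, arriving at a polynomial inequality of degree eight in $m=\sqrt{4n}$ (the denominator conveniently factors as $(m-1)(m-2)(m+2)$, positive throughout the relevant range); establishing this is elementary but lengthy, and some care is needed because several intermediate coefficients are negative. Collecting the three verified bounds then yields the asserted construction.
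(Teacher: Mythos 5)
Your proposal is correct and follows essentially the same route as the paper: Theorem~\ref{ch6:epsathm} is stated there only as a summary of the three case analyses in Section~\ref{ch6:subsection:cases_for_hadamard}, which rest on Theorem~\ref{ch6:th:ReducH} and Lemma~\ref{ch6:lemma:inverse} exactly as you describe. If anything your write-up is more careful than the source --- the normalization argument placing a prescribed $\pm1$ pattern in the top-left block, and the explicit verification of $\epsilon_t<\rho_t/\sqrt{n}$ (including the observation that for $t=3$ the prefactor $\sqrt{4n-3}/\sqrt{4n}$ cannot be discarded at small $n$), are asserted but not carried out in the paper.
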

Programs in Python are implemented in this regard to experiment with such $\epsilon$-Hadamard matrices, and the repository is available at~\cite{github}. We have noted that the actual values are even smaller than the expressions in Theorem~\ref{ch6:epsathm}, but the expressions will be more complicated.

\subsection{Construction of ARMUBs}
As we have already discussed, construction for Approximate MUBs using Hadamrd and other unitary matrices in conjunction with Resolvable Block Designs have been presented in~\cite{ak21,ak22,ak23}. The basic concept is to utilize an RBD$(X, A)$, with $|X|= d = k \times s$, having constant block size $k$, such that it has as many parallel classes as possible, but any pair of block from different parallel classes, should have maximum $1$ point in common $(\mu=1)$ (as we have discussed in Section~\ref{ch6:rbdsec}). Here we need to use the unitary matrix of the order of block size $k$, to convert each parallel class into an orthonormal basis.

Given this context, we present the main result as follows.
\begin{theorem}
\label{ch6:ARMUB}
Let $d = k \times s$, such that $s > k$ and $s$ is a power of odd prime. Let $\epsilon$-Hadamard matrix $Y$ of order $k$ exist. Then one can obtain $s \geq \lceil \sqrt{d}\rceil$ many $\beta$-ARMUBs, with 
$\beta \leq 1 + O(d^{-\frac{1}{4}})$, which is $< 2$ for an infinite class of dimensions $d$.
\end{theorem}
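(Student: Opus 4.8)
The plan is to realise the $s$ promised bases as the orthonormal bases attached, through Construction~\ref{ch6:cons:1}, to the $s$ parallel classes of a Resolvable Block Design on $d=ks$ points, using the given $\epsilon$-Hadamard matrix $Y$ of order $k$ as the ``local'' orthogonal matrix placed on \emph{every} block. Orthonormality of each individual basis is then automatic, and the whole content of the theorem lies in bounding the cross inner products; this bound is governed solely by the parameter $\mu$ of the design together with the entrywise estimate $|(Y)_{ij}|\le\frac{1+\epsilon}{\sqrt k}$ supplied by Theorem~\ref{ch6:th:ReducH}.

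First I would produce the design. Since $s$ is a power of an odd prime there exist $s-1$ mutually orthogonal Latin squares of order $s$, and since $k<s$ (so $k-1\le s-1$) these yield a resolvable transversal design on $d=ks$ points~\cite{stinson2007combinatorial}: an RBD$(X,A)$ with $|X|=d$, constant block size $k$, exactly $s$ blocks in each parallel class, exactly $s$ parallel classes $P_1,\dots,P_s$, and with any two blocks from \emph{different} parallel classes meeting in at most one point, i.e.\ $\mu=1$. As $d=ks<s^2$ we get $\sqrt d<s$, hence $s\ge\lceil\sqrt d\rceil$, which already accounts for the claimed count.

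Next I would run Construction~\ref{ch6:cons:1}: take $X$ to be the standard basis $\{\ket{e_1},\dots,\ket{e_d}\}$ of $\mathbb{R}^d$, and for each parallel class $P_l$ and each block $b=\{e_{j_1}<\dots<e_{j_k}\}\in P_l$ form the $k$ vectors $\sum_{p=1}^{k}(Y)_{q,p}\,\ket{e_{j_p}}$, $q=1,\dots,k$. As $Y$ is orthogonal the $k$ vectors from one block are orthonormal, and as the blocks of $P_l$ are disjoint the vectors from different blocks are supported on disjoint coordinates, so the $d$ vectors obtained from $P_l$ form an orthonormal basis $M_l$ of $\mathbb{R}^d$; this gives $s$ bases. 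For $l\ne m$, a vector of $M_l$ supported on $b\in P_l$ and a vector of $M_m$ supported on $b'\in P_m$ are orthogonal when $b\cap b'=\emptyset$, while if $b\cap b'=\{x\}$ the inner product collapses to the single product $(Y)_{i,a}(Y)_{j,a'}$ of the two entries of $Y$ sitting at the coordinate $x$; in either case
\[
\bigl|\braket{\psi_i^l | \psi_j^m}\bigr|\ \le\ \Bigl(\max_{p,q}|(Y)_{p,q}|\Bigr)^{2}\ \le\ \frac{(1+\epsilon)^2}{k}.
\]
Writing the right-hand side as $\beta/\sqrt d$ gives $\beta=\frac{\sqrt d}{k}(1+\epsilon)^2=\sqrt{s/k}\,(1+\epsilon)^2$, so $\{M_1,\dots,M_s\}$ are $\beta$-ARMUBs with this $\beta$.

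Finally I would control $\beta$ over an infinite family. For the dimensions $d=(4n-t)s$ in the statement one takes, for each odd prime power $s$, the largest $k<s$ with $k\not\equiv 0 \bmod 4$ and the corresponding $t\in\{1,2,3\}$, $n$ with $k=4n-t$; then $s-k\le2$, so $\sqrt{s/k}=1+O(1/k)=1+O(d^{-1/2})$, while Theorems~\ref{ch6:th:ReducH} and~\ref{ch6:epsathm} allow $Y$ to be chosen with $\epsilon=O(t\,k^{-1/2})=O(k^{-1/2})=O(d^{-1/4})$ since $k=\Theta(\sqrt d)$. Hence $\beta=(1+O(d^{-1/2}))(1+O(d^{-1/4}))^2=1+O(d^{-1/4})$, which is $<2$ for $d$ large, and the family of such $d$ is infinite because there are infinitely many odd prime powers $s$. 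The one genuinely delicate point is this last step: the clean bound $\beta\le1+O(d^{-1/4})$ requires \emph{both} $\epsilon$ and $\sqrt{s/k}-1$ to be $O(d^{-1/4})$, which is exactly why the hypotheses pin $k=4n-t$ to a nearby odd prime power $s$ and why the latitude $t\in\{1,2,3\}$ is used; by contrast, the orthonormality of each $M_l$ and the reduction of the cross inner product to a single entry of $Y$ via $\mu=1$ are routine once the transversal design and Construction~\ref{ch6:cons:1} are in hand.
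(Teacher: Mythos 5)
Your proposal is correct and follows essentially the same route as the paper: build a resolvable design on $d=ks$ points with $s$ parallel classes and $\mu=1$ (the paper cites the transversal-design construction from~\cite{ak23} where you derive it from MOLS), place the $\epsilon$-Hadamard matrix $Y$ on every block via Construction~\ref{ch6:cons:1}, and bound the cross inner products by $\mu\,(1+\epsilon)^2/k$ with $\epsilon=O(d^{-1/4})$ from Theorem~\ref{ch6:epsathm}. The only real difference is in the last step, where you fix an odd prime power $s$ and take $k$ to be the nearest non-multiple of $4$ below it (so $s-k\le 2$), which is a cleaner and fully rigorous way to exhibit the infinite family than the paper's appeal to the expected $O(\log d)$ spacing of prime powers near $\sqrt{d}$.
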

\begin{proof}
We broadly consider Construction~\ref{ch6:cons:1}, to obtain the orthonormal basis vectors in $\mathbb{R}^d$ corresponding to each parallel class in RBD$(X, A)$ using $\epsilon$-Hadamard matrix $Y$. Following~\cite{ak22,ak23} the absolute value of the dot product of any pair of basis vectors $u, v$ from different orthonormal basis $|\braket{u, v}| \leq \mu |(Y)_{ij}|_{\max}^2 \leq (\frac{1+\epsilon}{\sqrt{k}})^2 = \frac{(1+\epsilon)^2}{k}$, where $\mu$ is the intersection number and in this case any pair of blocks from different parallel classes will have maximum one point in common, i.e., $\mu = 1$. When $s-k > 0$ and $s$ a power of prime, then one can construct RBD$(X,A)$ having $s$ many parallel classes, each with $s$ many blocks of constant block size $k$ (refer to~\cite[Construction 3, Lemma 5]{ak23} for exact details). 

With proper manipulations of constants and noting the power of primes are at least as dense as the primes, we will obtain an odd prime power $s > \sqrt{d}$, in an expected distance $O(\log d)$ from $\sqrt{d}$. Thus, considering Theorem~\ref{ch6:epsathm} for the values of $\epsilon$, we obtain the inner product value between two vectors from two different ARMUBs  
$\leq \frac{(1+\epsilon)^2}{k} \leq \frac{1}{\sqrt{d}}(1 + O(d^{-\frac{1}{4}}))$. This gives $\beta \leq 1 + O(d^{-\frac{1}{4}})$. \qed
\end{proof}

\begin{example}
Let us now consider an example for $d = 79 \times 3^4$, where $k=79, s = 3^4$. In this case there are $79+1 = 80$ complex MUBs, but no pair of real ones. The existing works in this regard~\cite{ak21,ak22,ak23} do not consider this case too. In this effort, we will obtain $3^4 = 81$ many ARMUBs, which is $\lceil \sqrt{d} \rceil + 1$. As $79 = 4 \times 20 - 1$, from Theorem~\ref{ch6:epsathm}, we have $\epsilon < \frac{1}{2\sqrt{n}} = \frac{1}{2\sqrt{20}}$. Now, $k = 4n-1 = 79$, thus, the numeric expression will be $\beta < 1.252$.

In case, we consider $k=43$, instead of 79, we obtain $\beta \leq 1.697$. Naturally, we obtain the value of $\beta$ closest to 1 (from above), when $k, s$ are very close. 
\end{example}

\section{Conclusion}
In this work, we presented constructions of $\beta$-ARMUBs for certain dimensions $d$ that are not multiples of $4$, when $d$ is of the form $(4n-t)s$, where $s$ is an odd prime power. Specifically, we obtain $\beta = 1 + O(d^{-\frac{1}{4}}) < 2$, which is the upper bound of the inner product between any two vectors from two different MUBs. The number of ARMUBs available for these cases are $\geq \lceil \sqrt{d} \rceil$. Such classes of ARMUBs, for dimensions not divisible by 4, had not be presented earier.
However, our constructions do not yield real APMUBs (Almost Perfect Real MUBs or APRMUBs) since the set $\Delta$ in our case cannot guarantee $|\Delta| = 2$, which is mandatory for an APMUB construction, i.e., in our case, there may be more than 2 values of inner products. As no pair of real MUBs exists for dimensions $d \not\equiv 0 \bmod 4$, such constructions for APRMUBs would be an interesting research direction.  

\bibliographystyle{plain}

\end{document}